\newcommand{\ifdraft}[2]{\ifthenelse{\boolean{drft}}{#1}{#2}}
\let\todo\sidenote      
\RenewDocumentCommand\sidenotetext{oo+m}{
	\IfNoValueOrEmptyTF{#1}{%
		\@sidenotes@placemarginal{#2}{\textsuperscript{\{\thesidenote\}}{}~\scriptsize{#3}}%
		\refstepcounter{sidenote}%
	}{%
		\@sidenotes@placemarginal{#2}{\textsuperscript{#1}~#3}%
	}%
}
\RenewDocumentCommand\sidenotemark{o}{
	\@sidenotes@multichecker%
	\IfNoValueOrEmptyTF{#1}{%
		\@sidenotes@thesidenotemark{\{\thesidenote\}}%
	}{%
		\@sidenotes@thesidenotemark{#1}%
	}%
	\@sidenotes@multimarker%
}
	\newcommand{\todo}[1]{}
\newtheorem{theorem}{Theorem}
\newtheorem{definition}{Definition}
\newcommand{\sm}{\setminus}
\newcommand{\given}{\,\vert\,}
\newcommand{\suff}{\succcurlyeq}
\newcommand{\ui}[2]{UI(M : {#1} \setminus {#2})}
\newcommand{\ri}[2]{RI(M : {#1} ; {#2})}
\newcommand{\si}[2]{SI(M : {#1} ; {#2})}
\newcommand{\df}[2]{\delta(M:#1\setminus #2)}
\newcommand{\sampm}{\mathsf{M}}  
\newcommand{\sampx}{\mathsf{X}}
\newcommand{\sampy}{\mathsf{Y}}
\newcommand{\T}{\mathsf{T}}
\newcommand\indept{\protect\mathpalette{\protect\independenT}{\perp}}
\def\independenT#1#2{\mathrel{\rlap{$#1#2$}\mkern2mu{#1#2}}}
\title{Capturing and Interpreting Unique Information}
\author{%
	\IEEEauthorblockN{Praveen Venkatesh}
	\IEEEauthorblockA{Allen Institute\\
		and University of Washington\\
		Seattle, WA, USA\\
		\href{mailto:praveen.venkatesh@alleninstitute.org}{\texttt{praveen.venkatesh@alleninstitute.org}}
		\vspace{-8mm}}
	\and
	\IEEEauthorblockN{Keerthana Gurushankar}
	\IEEEauthorblockA{Department of Computer Science,\\
		Carnegie Mellon University\\
		Pittsburgh, PA, USA\\
		\href{mailto:kgurusha@andrew.cmu.edu}{\texttt{kgurusha@andrew.cmu.edu}}
		\vspace{-8mm}}
	\and
	\IEEEauthorblockN{Gabriel Schamberg}
	\IEEEauthorblockA{Department of Surgery,\\
		University of Auckland\\
		New Zealand\\
		\href{mailto:gabeschamberg@gmail.com}{\texttt{gabeschamberg@gmail.com}}
		\vspace{-8mm}}
}
\begin{document}

\maketitle

\thispagestyle{plain}
\pagestyle{plain}

\begin{abstract}
	Partial information decompositions (PIDs), which quantify information interactions between three or more variables in terms of uniqueness, redundancy and synergy, are gaining traction in many application domains.
	However, our understanding of the operational interpretations of PIDs is still incomplete for many popular PID definitions.
	In this paper, we discuss the operational interpretations of unique information through the lens of two well-known PID definitions.
	We reexamine an interpretation from statistical decision theory showing how unique information upper bounds the risk in a decision problem.
	We then explore a new connection between the two PIDs, which allows us to develop an informal but appealing interpretation, and generalize the PID definitions using a common Lagrangian formulation.
	Finally, we provide a new PID definition that is able to \emph{capture} the information that is unique.
	We also show that it has a straightforward interpretation and examine its properties.
\end{abstract}

\section{Introduction}

Partial information decompositions (PIDs) have become a popular method for understanding the information interactions between multiple random variables.
A \emph{bivariate} PID seeks to decompose the information that two variables $X$ and $Y$ convey about a message $M$, into parts that are unique to $X$, unique to $Y$, redundant to $X$ and $Y$, and synergistic~\cite{williams2010nonnegative, banerjee2018unique, bertschinger2014quantifying}.

As a simple example, consider a message $M = [M_1, M_2, M_3, M_4]$, and two variables $X = [M_1, M_3, M_4 \oplus Z]$ and $Y = [M_2, M_3, Z]$, where $M_i, Z \sim $ i.i.d. Ber$(1/2)$ and $\oplus$ represents an XOR operation between bits.
Here, $X$ has one bit of unique information about $M$, i.e., $M_1$, which is not present in $Y$.
Similarly, $Y$ has one bit of unique information about $M$, i.e., $M_2$, which is not present in $X$.
There is one bit of redundant information, i.e., $M_3$, which can be extracted from \emph{either} $X$ or $Y$ taken alone.
Finally, there is one bit of synergistic information, i.e., $M_4$: this information cannot be extracted from either $X$ or $Y$ individually, but can be recovered when both are taken \emph{together}.

PIDs have found applications in various fields, from neuroscience (where one may want to examine the interaction between stimuli, neural activity and behavioral response)~\cite{pica2017quantifying, timme2018tutorial} to financial markets~\cite{scagliarini2020synergistic}.
Recent works have also used PIDs to explain how information complexity decreases through the layers of a deep neural network~\cite{ehrlich2022partial}, as well as to develop new measures of fairness in machine learning~\cite{dutta2020information}.

Despite increasingly widespread adoption, there is still no consensus on how PIDs should be defined, or on how to operationally interpret partial information quantities (e.g., see \cite{lizier2018information, kolchinsky2022novel}).
One popular approach for operational interpretations has relied on the concept of Blackwell sufficiency from statistical decision theory.
Blackwell sufficiency is a formal way to determine whether $X$ contains all of the information that $Y$ has about $M$.
Thus, it becomes a natural basis for discussing how two variables carry information about a message.
For example, Kolchinsky~\cite{kolchinsky2022novel} uses it to operationalize measures of redundancy and ``union'' information.

Here, we restrict our attention to interpretations of unique information.
Bertschinger et al.~\cite{bertschinger2014quantifying} used Blackwell sufficiency to motivate a definition of unique information.
But their interpretation only addressed whether or not the unique information was zero or non-zero, and did not provide an interpretation for the \emph{quantity} of unique information.
More recently, Banerjee et al.~\cite{banerjee2018unique} and Rauh et al.~\cite{rauh2019unique} interpreted the quantification of unique information in terms of a secret key rate using a context from information-theoretic security.
However, such an interpretation is difficult to translate to other contexts like neuroscience, where there may not be an analog for an eavesdropper.

This paper discusses two PID definitions based on Blackwell sufficiency~\cite{banerjee2018unique, bertschinger2014quantifying}, and provides an operational interpretation of the \emph{quantity} of unique information in each case.
Extending classical results on so-called ``deficiency'' measures~\cite{torgersen1991comparison, mariucci2016cam}, and clarifying results in~\cite{banerjee2018unique}, we show that the unique information about $M$ present in $X$ w.r.t.\ $Y$ upper bounds the difference in risk attained in a decision problem, when one uses $X$ rather than $Y$ to make decisions pertaining to $M$ (Sections~\ref{sec_def_ub_risk}, \ref{sec_ri_delta_symm}, and \ref{sec_ui_tilde_ub_risk}).

We then identify a previously unrecognized connection between the aforementioned PIDs, which shows that the two definitions swap the objective and constraint in their respective optimizations (Section~\ref{sec_conn_delta_tilde_pids}).
This discovery allows us to clarify how these definitions are related to Blackwell sufficiency, and provide an informal but appealing interpretation for them (Section~\ref{sec_conn_bs_informal_int}).
Finally, we develop a novel generalization of the two PIDs, through a common Lagrangian (Section~\ref{sec_generalization}).
In the process, we also explicitly raise an issue pertaining to symmetrization of redundancy, and show how it complicates the interpretation of unique information (Sections~\ref{sec_ri_delta_symm}, \ref{sec_ri_tilde_symm}).

Lastly, in Section~\ref{sec_tmxy_pid}, we propose a new PID definition that captures the \emph{part} of $M$ that is unique in the form of a random variable.
We hinted at this PID in our previous work~\cite{gurushankar2022extracting}, without defining it or discussing its properties.
Here, we define the PID formally through redundancy symmetrization, show that it forms a valid non-negative decomposition and that it obeys intuitive bounds.
We also show that this PID definition is Blackwellian~\cite{venkatesh2022partial} when $M$, $X$ and $Y$ are jointly Gaussian.

\section{Background}
\label{sec_background}

\subsection{Notation}

\begin{itemize}[leftmargin=1em]
	\item Let $M$, $X$ and $Y$ be three random variables  with sample spaces $\sampm$, $\sampx$ and $\sampy$ respectively, and joint density $P_{MXY}$.
	\item Let $\mathcal C(\mathsf{A} \given \mathsf{B})$ denote the set of all \emph{channels} from $\mathsf{A}$ to $\mathsf{B}$, so for example, $P_{X|M} \in \mathcal C(\sampx \given \sampm)$.
	\item Let $\circ$ denote composition of channels, i.e.\ $\forall\; a \in \mathsf{A}, c \in \mathsf{C}$,
		\begin{equation*}
			(P_{A|B} \circ P_{B|C})(a \given c) \coloneqq \int_{\mathsf{B}} P_{A|B}(a \given b) \cdot P_{B|C}(b \given c) \,db.
		\end{equation*}
	\item To keep the exposition simple, we ignore any measure-theoretic nuances.
		All conditional distributions and information measures are assumed to be well-defined.
\end{itemize}

\subsection{Defining PIDs}

There are many notions of partial information decompositions: we focus here on the \emph{bivariate} case, which decomposes the information that \emph{two} variables $X$ and $Y$ have about a message $M$.
Such a PID is typically defined by a set of four functions of the joint distribution $P_{MXY}$---denoted $UI(M : X \setminus Y)$, $UI(M : Y \setminus X$), $RI(M : X ; Y)$ and $SI(M : X ; Y)$ (or $UI_X$, $UI_Y$, $RI$ and $SI$ respectively for brevity)---which satisfy the following basic equations:
\begin{align}
	I\bigl(M; (X,Y)\bigr) &= \ui{X}{Y} + \ui{Y}{X} \notag \\
						  &\hphantom{= \vphantom{UI}} + \ri{X}{Y} + \si{X}{Y}, \label{eq_pid} \\
    I\bigl(M; X\bigr) &= \ui{X}{Y} + \ri{X}{Y} \label{eq_pid_x}, \\
    I\bigl(M; Y\bigr) &= \ui{Y}{X} + \ri{X}{Y} \label{eq_pid_y}.
\end{align}
Equation~\eqref{eq_pid} implies that the total mutual information about $M$ conveyed by $X$ and $Y$ is the sum of four partial information components: one unique to $X$, one unique to $Y$, another redundant to both $X$ and $Y$, and the last which is synergistic, respectively.
Equations~\eqref{eq_pid_x} and \eqref{eq_pid_y} enforce that the individual mutual information of $X$ or $Y$ with $M$ is the sum of the redundant information and the corresponding unique information.%
\footnote{Typically, it is also assumed that the redundant and synergistic components are symmetric in $X$ and $Y$.}
These equations impose three constraints on the four partial information components, such that defining any one component suffices to specify the other three.

In this paper we discuss the operational interpretations of two existing PID definitions due to \cite{banerjee2018unique} and \cite{bertschinger2014quantifying} in Section~\ref{sec_int_delta_sim_pid}, and then introduce a new PID definition in Section~\ref{sec_tmxy_pid}.
We state here the first two definitions as defined originally, and later we present modified forms which are more interpretable.
\begin{definition}[$\delta$-PID \cite{banerjee2018unique}] \label{def_delta_pid}
	Let the (weighted output) \emph{deficiency}%
	\footnote{\emph{Deficiency} was introduced by Le Cam to quantify a departure from Blackwell \emph{sufficiency}.}
	of $Y$ with respect to $X$ about $M$ be defined as%
	\footnote{The reason for this notation is that the \emph{deficiency} of $Y$ w.r.t.\ $X$ translates to the \emph{unique information} present in $X$ and not in $Y$.}
	\begin{equation}\label{eq_deficiency}
		\df{X}{Y} \coloneqq \;\; \inf_{\mathclap{\vphantom{X^{X^X}} P_{X'|Y} \,\in\, \mathcal C(\sampx|\sampy)}} \;\; \mathbb{E}_{P_{M}}\bigl[D(P_{X|M} \,\Vert\, P_{X'|Y}\circ P_{Y|M})\bigr].
	\end{equation}
	Then, the deficiency-based redundant information about $M$ present in $X$ and $Y$ is given by
	\begin{equation} \label{eq_ri_delta}
		\begin{aligned}
			RI^\delta(M : X ; Y) \coloneqq \min\{&{} I(M ; X) - \df{X}{Y}, \\
												 &{} I(M ; Y) - \df{Y}{X}\}.
		\end{aligned}
	\end{equation}
\end{definition}
Using equations \eqref{eq_pid}--\eqref{eq_pid_y}, $RI^\delta_X$ fully determines the $\delta$-PID, i.e. $UI^\delta_X$, $UI^\delta_Y$, and $SI^\delta$.
\begin{definition}[$\sim$-PID\footnote{Also called the BROJA-PID in the literature after the authors of~\cite{bertschinger2014quantifying}.} \cite{bertschinger2014quantifying, griffith2014quantifying}] \label{def_tilde_pid}
	The unique information about $M$ present in $X$ and not in $Y$ is given by
	\begin{equation} \label{eq_ui_tilde}
		\widetilde{UI}(M : X \setminus Y) \coloneqq \min_{Q \in \Delta_P} I_Q(M ; X \given Y),
	\end{equation}
	where $\Delta_P \coloneqq \{Q_{MXY}: Q_{MX} = P_{MX},\; Q_{MY} = P_{MY}\}$ and $I_Q(\cdot \given \cdot)$ is the conditional mutual information over the joint distribution $Q_{MXY}$.\todo{What else needs to be done to make this continuous here?}
\end{definition}
As with the $\delta$-PID, equations \eqref{eq_pid}--\eqref{eq_pid_y} fully determine the remaining components of the $\sim$-PID.

\subsection{Blackwell sufficiency and Blackwellian PIDs}
\label{sec_blackwellian}

Blackwell sufficiency provides a partial order between random variables based on how informative they are about a message $M$.
This notion was used by~\cite{bertschinger2014quantifying} to provide an operational motivation for the $\sim$-PID, and also underlies the basis of the $\delta$-PID~\cite{banerjee2018unique}.
\begin{definition}[Blackwell sufficiency: $\suff_M$] \label{def_blackwell}
	We say that a channel $P_{X|M}$ is \emph{Blackwell sufficient} w.r.t.\ another channel $P_{Y|M}$ (denoted $X \suff_M Y$) if $\exists\; P_{Y'|X} \in \mathcal C(\mathsf{Y} \given \mathsf{X})$ such that
	\begin{equation}
		P_{Y'|X}\circ P_{X|M} \;=\; P_{Y|M}.
	\end{equation}
\end{definition}
Intuitively, $X \suff_M Y$ means that we can generate a new random variable $Y'$ from $X$ (using the stochastic transformation $P_{Y'|X}$) so that the effective channel from $M$ to $Y'$ is equivalent to the original channel from $M$ to $Y$.%
\footnote{Blackwell sufficiency is identical to the concept of stochastic degradedness of broadcast channels~\cite{venkatesh2022partial}.}
It was shown by Blackwell~\cite{blackwell1953equivalent} that if $X$ is Blackwell sufficient for $M$ w.r.t.\ $Y$, then it is always preferable to observe $X$ rather than $Y$, for making decisions about $M$.
This operational interpretation of Blackwell sufficiency was extended to PIDs by~\cite{bertschinger2014quantifying}:
\begin{definition}[Blackwellian PID]\label{def_blackwellian}
	A bivariate PID on $P_{MXY}$ is said to be Blackwellian if
	\begin{equation*}
		UI_X = 0 \;\Leftrightarrow\; Y \suff_M X \quad\text{and}\quad UI_Y = 0 \;\Leftrightarrow\; X \suff_M Y
	\end{equation*}
\end{definition}
This means that (for a Blackwellian PID definition) the unique information in one variable is zero only if it is always beneficial to observe the \emph{other} variable to make decisions about $M$.
Conversely, if $X$ is not Blackwell sufficient for $M$ w.r.t.\ $Y$, then $Y$ must have some unique information about $M$ that $X$ cannot access.

However, it is important to note that a Blackwellian PID is only operationally motivated to the extent of whether or not the unique information is \emph{zero}.
It does not lend an operational interpretation as to the \emph{volume} of unique information when it is non-zero.

\section{Interpreting the $\delta$- and $\sim$-PIDs}
\label{sec_int_delta_sim_pid}

\subsection{Deficiency upper bounds the difference in risk}
\label{sec_def_ub_risk}

The $\delta$-PID derives its operational interpretation directly from that of deficiency~\cite{lecam1964sufficiency, torgersen1991comparison}, upon which it is based.
The deficiency of $Y$ w.r.t.\ $X$, originally defined by Le Cam~\cite{lecam1964sufficiency}, measures how far from Blackwell sufficient $Y$ is, w.r.t.\ $X$.

Le Cam's original notion of deficiency was defined using the total variation distance, and as a worst case over realizations of $M$.
That was a frequentist context, where $M$ was a statistical parameter and not a random variable. Following Raginsky~\cite{raginsky2011shannon}, the Le Cam deficiency of $Y$ w.r.t.\ $X$ about $M$ is:
\begin{equation}
	\begin{aligned}
		\MoveEqLeft[1] \delta^{\text{LeCam}}(M: X \setminus Y) \\
		& \coloneqq \inf_{\substack{P_{X'|Y} \\ \in\, \mathcal C(X|Y)}} \sup_{m \in \sampm} \, \bigl\lVert P_{X'|Y} \circ P_{Y|M=m} - P_{X|M=m} \bigr\rVert_{TV}
	\end{aligned}
\end{equation}
The Le Cam deficiency can be interpreted as upper bounding the difference in risk (for any bounded loss function) when using $X$ rather than $Y$ to make decisions based on $M$.
We can state this formally, using the setup of a decision problem:
\begin{definition}[Decision problem] \label{def_decision_problem}
	Suppose we need to perform actions based on the value of $M$, which we cannot observe directly (e.g., we may want to estimate the value of $M$).
	We have access to either $X \sim P_{X|M}$ or $Y \sim P_{Y|M}$, which can give us information about $M$.
	The actions we take after observing either $X$ or $Y$---call these $\widehat M_X(x)$ and $\widehat M_Y(y)$ respectively---incur a bounded loss that depends on the chosen action and the value of $M$.
	Let $\mathcal L(\widehat M(\cdot), M)$ ($\lVert \mathcal L \rVert_\infty \leq 1$) be the loss function, where $\widehat M(\cdot)$ may be either $\widehat M_X(x)$ or $\widehat M_Y(y)$, depending on whether we choose to observe $X$ or $Y$.
	How do we decide whether to choose $X$ or $Y$ when we do not know $\mathcal L$?
\end{definition}

Blackwell~\cite{blackwell1953equivalent} showed that if $X \suff_M Y$, we can always attain a lower loss (on average) by choosing $X$.
What happens when Blackwell sufficiency does not hold?
Define the risk as the expected loss over either $X$ or $Y$:
\begin{equation}
	\mathcal R_m(P_{X|M}, \widehat M_X, \mathcal L) \coloneqq \mathbb E_{X \sim P_{X|M=m}} \bigl[ \mathcal L(\widehat M_X(X), m) \bigr]
\end{equation}
If Blackwell sufficiency does not hold, then the worst-case risk (over $M$) when you choose $X$ is at most that when you choose $Y$, plus the Le Cam deficiency of $X$~\cite{torgersen1991comparison, mariucci2016cam}.
In other words, for any $m$ and for any $\widehat M_Y$, there exists an $\widehat M_X$ such that%
\footnote{Recall that the \emph{deficiency in $X$} is denoted $\df{Y}{X}$, because it corresponds to the \emph{unique information in $Y$}.}
\begin{equation}
	\begin{aligned}
		\mathcal R_m(P_{X|M}, \widehat M_X, \mathcal L) &\leq \mathcal R_m(P_{Y|M}, \widehat M_Y, \mathcal L) \\
													&\hphantom{\leq \vphantom{\mathcal R}}+ \delta^{\text{LeCam}}(M : Y \sm X).
	\end{aligned}
\end{equation}

Raginsky~\cite{raginsky2011shannon} showed how alternative measures like the KL-divergence may be used in place of the total variation distance, while preserving the aforementioned risk-based operational interpretation.
In that work, Raginsky preserved the frequentist setting, taking the worst case divergence between $P_{X'|M=m}$ and $P_{X|M=m}$, over all realizations of $M$.
However, for partial information decompositions, $M$ is a random variable and thus it makes more sense to consider the \emph{expected} divergence over different values of $M$.
This is what Banerjee et al.~\cite{banerjee2018unique} did, in proposing the PID stated in Definition~\ref{def_delta_pid}.
They show that the risk-based operational interpretation extends to the new deficiency definition $\delta(M : X \sm Y)$ \cite[Prop.~8]{banerjee2018unique}, but do not extend it to the corresponding unique information.
We first state the theorem for deficiency, and show the extension in the following subsection.
\begin{theorem} \label{thm_delta_pid_op_int}
	Let the average risk be given by
	\begin{equation}
		\bar{\mathcal R}(P_{X|M}, \widehat M_X, \mathcal L) \coloneqq \mathbb E_{M,X} \bigl[ \mathcal L(\widehat M_X(X), M) \bigr]
	\end{equation}
	Then, for any $\widehat M_Y$, there exists an $\widehat M_X$ such that
	\begin{equation}
		\begin{aligned}
			\bar{\mathcal R}(P_{X|M}, \widehat M_X, \mathcal L) &\leq \bar{\mathcal R}(P_{Y|M}, \widehat M_Y, \mathcal L) \\
																&\hphantom{\leq \vphantom{R}}+ g(\delta(M : Y \sm X)),
		\end{aligned}
	\end{equation}
	where $g(\cdot)$ is a monotonically increasing function.
\end{theorem}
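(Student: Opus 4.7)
The plan is to construct $\widehat M_X$ as a randomized decision rule that \emph{simulates} $\widehat M_Y$, using the channel appearing in the definition of deficiency to produce a ``pseudo-$Y$'' from $X$. Concretely, for any $\epsilon > 0$, I would pick $P_{Y'|X} \in \mathcal C(\sampy \given \sampx)$ such that
\begin{equation*}
	\mathbb E_{P_M}\bigl[D(P_{Y|M} \,\Vert\, P_{Y'|X} \circ P_{X|M})\bigr] \;\leq\; \df{Y}{X} + \epsilon,
\end{equation*}
and define $\widehat M_X(x)$ as the rule that draws $Y' \sim P_{Y'|X}(\cdot \given x)$ and outputs $\widehat M_Y(Y')$. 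Writing $P_{Y'|M} \coloneqq P_{Y'|X} \circ P_{X|M}$, the resulting average risk equals $\mathbb E_M \mathbb E_{Y' \sim P_{Y'|M}}[\mathcal L(\widehat M_Y(Y'), M)]$, so the entire problem reduces to controlling how well $P_{Y'|M}$ approximates $P_{Y|M}$.

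Next, since $\lVert \mathcal L \rVert_\infty \leq 1$, I would bound the risk gap by the total variation distance between these two conditional laws:
\begin{align*}
	\MoveEqLeft \bar{\mathcal R}(P_{X|M}, \widehat M_X, \mathcal L) - \bar{\mathcal R}(P_{Y|M}, \widehat M_Y, \mathcal L) \\
	&= \mathbb E_M \!\left[ \int \mathcal L(\widehat M_Y(y), M)\bigl(P_{Y'|M} - P_{Y|M}\bigr)(dy) \right] \\
	&\leq 2\,\mathbb E_M\bigl[\lVert P_{Y'|M} - P_{Y|M} \rVert_{TV}\bigr].
\end{align*}
Applying Pinsker's inequality pointwise in $m$ gives $\lVert P_{Y'|M} - P_{Y|M} \rVert_{TV} \leq \sqrt{\tfrac{1}{2} D(P_{Y|M} \,\Vert\, P_{Y'|M})}$, and Jensen's inequality (since $\sqrt{\cdot}$ is concave) pulls the expectation over $M$ inside the square root, yielding the bound $\sqrt{2(\df{Y}{X} + \epsilon)}$.

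This establishes the claim with $g(x) \coloneqq \sqrt{2x}$, which is monotonically increasing. The main obstacle I anticipate is not analytic but bookkeeping: the infimum in the definition of $\df{Y}{X}$ may not be attained, so strictly speaking one must select a different $\widehat M_X$ for each $\epsilon > 0$, and then invoke the continuity of $g$ to absorb the $\epsilon$ slack (or else state the conclusion with an arbitrarily small additive term). A secondary point is that $\widehat M_X$ is randomized; this is standard in statistical decision theory, but if a deterministic rule is required, a Fubini-based derandomization over the auxiliary coins of $P_{Y'|X}$ preserves the same expected-risk bound. The analytic core is the classical Pinsker-plus-Jensen recipe used by Raginsky~\cite{raginsky2011shannon}, adapted here to an \emph{expected} (rather than worst-case) divergence over realizations of $M$.
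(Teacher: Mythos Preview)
Your proposal is correct and follows essentially the same route as the paper: simulate $\widehat M_Y$ through a near-optimal channel $P_{Y'|X}$, then bound the risk gap via total variation, Pinsker, and Jensen. The paper arrives at $g(z)=\sqrt{z/2}$ rather than your $\sqrt{2z}$ (a total-variation normalization choice), and it silently assumes the infimum is attained, whereas you handle the $\epsilon$-slack explicitly; both points are cosmetic.
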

A proof of the above theorem is presented in Appendix~\ref{app_delta_int_proof}.

\subsection{Interpreting $UI^\delta$ after redundancy-symmetrization}
\label{sec_ri_delta_symm}

\begin{figure}[t]
	\centering
	\includegraphics[width=0.5\linewidth]{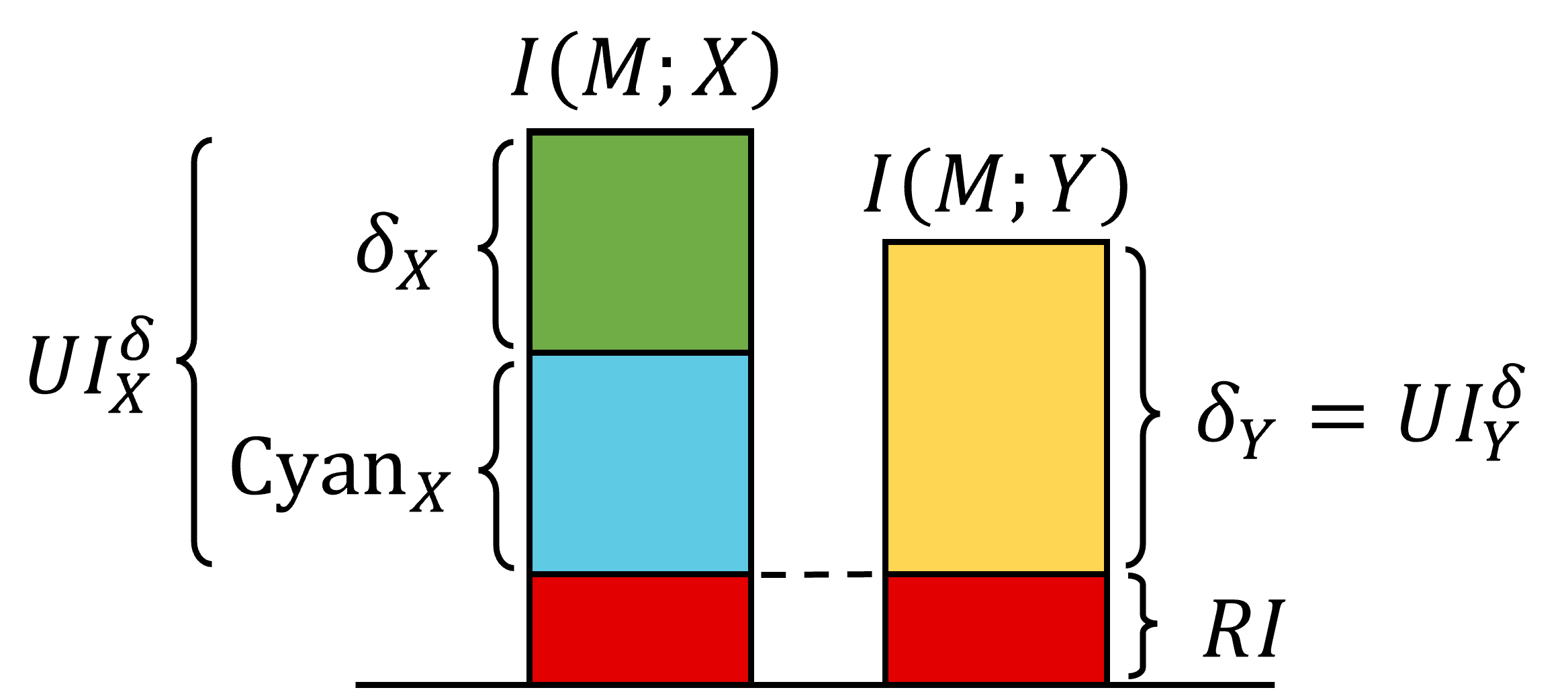}
	\caption{A depiction of the cyan region problem described in Section~\ref{sec_ri_delta_symm} for the $\delta$-PID.
	The two bars represent the quantity of mutual information $M$ has with $X$ and $Y$ respectively; the green and yellow portions represent how much of that information is the deficiency; and the red portion represents the symmetrized redundancy.
	The cyan region is part of the unique information in $X$, but cannot be accounted for by deficiency.}
	\label{fig_cyan_region}
\end{figure}

Despite the existence of a clear operational interpretation for the deficiency as defined in Definition~\ref{def_delta_pid}, the PID that arises out of this deficiency still needs an interpretation.
In particular, we need to address what happens when we symmetrize the redundancy in Equation~\eqref{eq_ri_delta}.
This symmetrization step is required because $I(M;X) - \df{X}{Y}$ is not always symmetric in $X$ and $Y$.
Interestingly, this issue does not arise in the case of the $\sim$-PID, as we discuss in Section~\ref{sec_ri_tilde_symm}.

First, we note that the operational interpretation for unique information described by Theorem~\ref{thm_delta_pid_op_int} is still valid, although the bound may be somewhat loose:
\begin{align}
	\MoveEqLeft[2] UI^\delta(M : X \setminus Y) \notag \\
	&= I(M ; X) - RI^\delta(M : X ; Y) \\
	&= \max\{\delta(M : X \setminus Y), \notag \\
	&\hphantom{= \max\{\;} \delta(M : Y \setminus X) + I(M ; X) - I(M ; Y)\} \\
	&\geq \delta(M : X \setminus Y), \label{eq_ui_gt_delta}
\end{align}
Thus, the unique information can act as an upper bound for the difference in risk, in place of deficiency.

However, one of the two unique informations, $UI^\delta_X$ or $UI^\delta_Y$, is guaranteed to be loose in this way.
We can quanitfy the extent of looseness as follows: suppose that $I(M ; X) - \df{X}{Y} > I(M; Y) - \df{Y}{X}$. Then, $RI^\delta(M : X ; Y) = I(M ; Y) - \df{Y}{X}$, and thus
\begin{align}
	UI^\delta(M : Y \sm X) &= \df{Y}{X} \\
	UI^\delta(M : X \sm Y) &= \df{Y}{X} + I(M ; X) - I(M ; Y).
\end{align}
In other words, the excess quantity added to $UI^\delta(M : X \sm Y)$, over and above the deficiency is
\begin{equation}
	\begin{aligned}
		\mathrm{Cyan}(M : X \sm Y) &\coloneqq I(M ; X) - \df{X}{Y} \\
								   &\hphantom{\coloneqq \vphantom{I}}- I(M ; Y) + \df{Y}{X}.
	\end{aligned}
\end{equation}
For lack of a better name, we call this the ``cyan region'', due to how it is depicted in Figure~\ref{fig_cyan_region}.
It is completely unclear what the interpretation of $\text{Cyan}(M : X \sm Y)$ ought to be, and why this information should be considered unique to $X$ (see Figure~\ref{fig_cyan_region}).

Essentially, we pay the cost of a loose bound in $\ui{X}{Y}$, and the extent of loosening does not have a clear justification of itself, except that it helps symmetrize the redundancy.
This gives rise to the desire for a definition that does not require the explicit symmetrization performed in Equation~\eqref{eq_ri_delta}.

\subsection{The $\sim$-PID redundancy is intrinsically symmetric}
\label{sec_ri_tilde_symm}

In a stroke of serendipity, the redundancy under the $\sim$-PID of Definition~\ref{def_tilde_pid} is naturally symmetric in $X$ and $Y$~\cite{bertschinger2014quantifying}.
Let $Q^*$ be the joint distribution that achieves the optimum in Equation~\eqref{eq_ui_tilde}. Then,
\begin{align}
	\widetilde{RI}(M : X ; Y) &= I(M ; X) - \ui{X}{Y} \\
							  &\overset{(a)}{=} I(M ; X) - I_{Q^*}(M ; X \given Y) \\
							  &\overset{(b)}{=} I_{Q^*}(M ; X) - I_{Q^*}(M ; X \given Y) \\
							  &\eqqcolon I_{Q^*}(M ; X ; Y),
\end{align}
where (a) invokes Definition~\ref{def_tilde_pid}, (b) uses the constraint that $Q^* \in \Delta_P$ so that $Q^*(m, x) = P(m, x)$, and $I_{Q^*}(M ; X ; Y)$ is the \emph{multivariate mutual information} (the negative of which is also sometimes called the \emph{interaction information}) on the distribution $Q^*$, which can be expressed as shown below (e.g., using the standard formulae from \cite[Ch.~2]{cover2012elements}):
\begin{align}
	\MoveEqLeft[1] I_{Q*}(M ; X ; Y) \\
	&= \mathbb E_{m, x, y \,\sim\, Q^*_{MXY}} \biggl[ \log \frac{Q^*(m, x) Q^*(m, y) Q^*(x, y)}{Q^*(m) Q^*(x) Q^*(y) Q^*(m, x, y)} \biggr] \notag
\end{align}
Thus, $\widetilde{RI}$ becomes equal to the multivariate mutual information on $Q^*$, which is symmetric in $x$ and $y$ by definition.

Since the $\sim$-PID has a naturally symmetric redundancy, we might want to examine whether it shares the risk-based operational interpretation of the $\delta$-PID.
We examine this, as well as alternative interpretations, in the following sections.


\subsection{$\widetilde{UI}$ upper bounds the difference in risk}
\label{sec_ui_tilde_ub_risk}

The unique information of the $\sim$-PID, $\widetilde{UI}_X$, also acts as an upper bound for the difference in risk when choosing $Y$ rather than $X$ in the decision problem from Definition~\ref{def_decision_problem}.
This follows directly from a result of Bertschinger et al.~\cite{bertschinger2014quantifying}, which states that $\widetilde{UI}_X$ upper bounds the unique information of any other PID definition that satisfies what they call ``Assumption~$(*)$''.
According to this assumption, a definition of unique information should depend only on $P_M$, $P_{X|M}$ and $P_{Y|M}$, and not on the whole joint distribution $P_{MXY}$.
Since the $\delta$-PID satisfies Assumption~$(*)$, we have that $\widetilde{UI}_X \geq UI^\delta_X$, which implies that Theorem~\ref{thm_delta_pid_op_int} extends to the $\sim$-PID as well, although the upper bound may be loose.


\subsection{A connection between the $\sim$-PID and the $\delta$-PID}
\label{sec_conn_delta_tilde_pids}


We now present a previously unidentified connection between these two PIDs, and use this connection to develop an intuitive interpretation for both PIDs.

First, observe that the $\delta$-PID can be thought of as optimizing $P_{X'|MY}$ instead of $P_{X'|Y}$, so long as we include the constraint that $M$---$Y$---$X'$ forms a Markov chain.
This constraint can also be written as $I(M ; X' \given Y) = 0$.
Thus, abbreviating $P_{X'|Y} \circ P_{Y|M}$ as $P_{X'|M}$, we can write the deficiency as:
\begin{equation} \label{eq_delta_pid_new}
	\begin{aligned}
		\MoveEqLeft[10] \delta(M : X \sm Y) = \inf_{P_{X'|MY}} \mathbb E_M\left[D_{KL}(P_{X|M} \,\Vert\, P_{X'|M})\right] \\
		&\text{s.t.} \quad I(M ; X' \given Y) = 0.
	\end{aligned}
\end{equation}

Next, we note that the definition of $\sim$-PID can also be rewritten into a similar form.
The optimization variable $Q$ in Definition~\ref{def_tilde_pid} obeys the constraints that $Q_{MX} = P_{MX}$ and $Q_{MY} = P_{MY}$.
Suppose we change notation by introducing a new random variable $X'$ using the stochastic transformation $P_{X'|MY}$, but which also obeys $P_{X'M} = P_{XM}$---or equivalently, $P_{X'|M} = P_{X|M}$.
Then, the distribution $P_{MX'Y}$ plays exactly the same role as $Q_{MXY}$, and obeys precisely the same constraints.
Thus, the $\sim$-PID definition can also be written as:
\begin{equation} \label{eq_tilde_pid_new}
	\begin{aligned}
		\MoveEqLeft[4] \widetilde{UI}(M : X \sm Y) = \inf_{P_{X'|MY}} I(M ; X' \given Y) \\
		&\text{s.t.} \quad \mathbb E_M\left[D_{KL}(P_{X|M} \,\Vert\, P_{X'|M})\right] = 0,
	\end{aligned}
\end{equation}
where the constraint $P_{X'|M} = P_{X|M}$ has been expressed in terms of zero expected KL-divergence between the channels.

This reveals the remarkable similarity between the $\delta$- and $\sim$-PIDs as written in Equations~\eqref{eq_delta_pid_new} and \eqref{eq_tilde_pid_new}.
The two PIDs are essentially optimizing over the same quantities, but in effect, interchange objective and constraint.

\subsection{Clarifying the connection to Blackwell sufficiency, and a new informal interpretation}
\label{sec_conn_bs_informal_int}

\begin{figure}[t]
	\centering
	\includegraphics[width=0.8\linewidth]{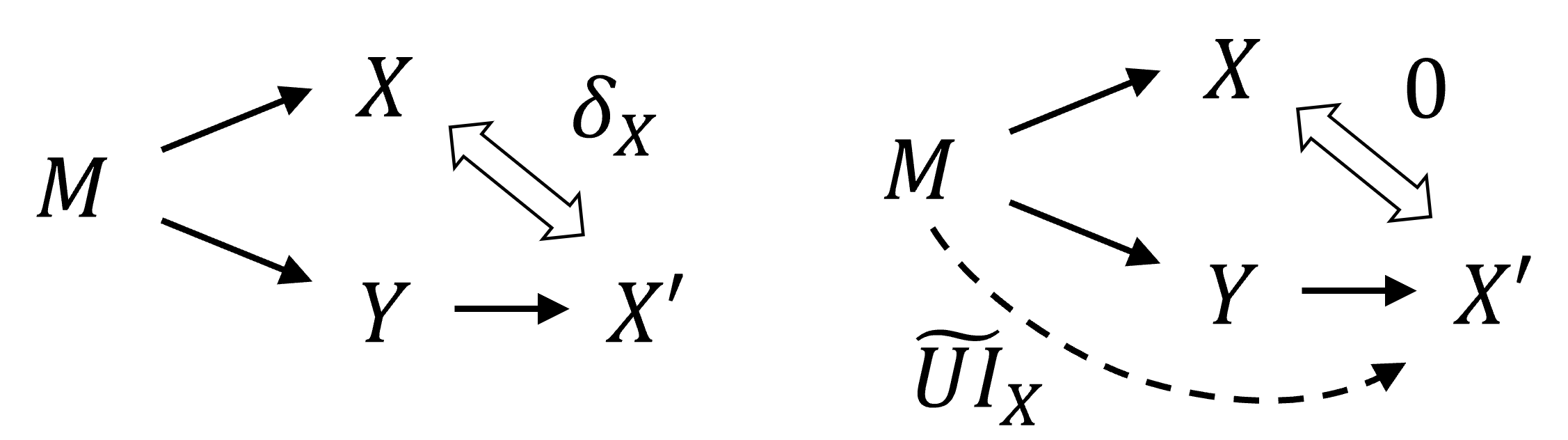}
	\caption{A depiction of the informal interpretations of the $\delta$- and $\sim$-PIDs, as described in Section~\ref{sec_conn_bs_informal_int}.
	(Left) The $\delta$-PID enforces the Markov chain $M$---$Y$---$X'$, and measures how far $X'$ is from a copy of $X$, i.e., it measures the divergence between $P_{X|M}$ and $P_{X'|M}$.
	(Right) The $\sim$-PID breaks the Markov chain by allowing bits of $M$ to leak to $X'$ outside of $Y$, however, it enforces that $X'$ is a copy of $X$.
	$\widetilde{UI}_X$ measures the minimum number of bits $X'$ needs to borrow from $M$ along the dashed line, so that $P_{X'|M}$ is a copy of $P_{X|M}$.}
	\label{fig_delta_tilde_pid_expln}
\end{figure}

Using the newfound connection between the $\delta$- and $\sim$-PIDs, we can clarify their connection to Blackwell sufficiency, and provide an informal interpretation.

First, Blackwell sufficiency can be re-understood as follows.
$Y \suff_M X$ if two requirements are met: \begin{enumerate*}[label=(\roman*)]
	\item there must exist a random variable $X'$ that is derived from $Y$ through the stochastic transformation $P_{X'|Y}$, i.e., $M$---$Y$---$X'$ must be a \emph{Markov chain}; and
	\item $X'$ must act as a ``\emph{copy}'' of $X$ w.r.t.\ $M$, in the sense that $P_{X'|M} = P_{X|M}$.%
		\footnote{This is similar to the ``simulatable'' notion presented in \cite[Defn.~38]{banerjee2018unique}.}
\end{enumerate*}

When ${Y \,\;/\,\mathclap{\suff_M}\;\;\;\, X}$, the $\delta$-PID and the $\sim$-PID quantify departures from Blackwell sufficiency in two different ways (also see Figure~\ref{fig_delta_tilde_pid_expln}): \begin{enumerate*}[label=(\roman*)]
	\item the $\delta$-PID enforces the Markov chain and measures how far we are from a copy (refer Eq.~\ref{eq_delta_pid_new});
	\item the $\sim$-PID enforces the copy and measures how far we are from having a Markov chain (refer Eq.~\ref{eq_tilde_pid_new}).
\end{enumerate*}
This unified explanation of the $\delta$- and $\sim$-PIDs has not been identified in the literature previously, to our knowledge.

We can also use this picture to offer a new informal interpretation.
If Alice and Bob opt for $X$ and $Y$ respectively in the decision problem of Definition~\ref{def_decision_problem}, the deficiency $\delta_X$ measures the closest that Bob can come to emulating Alice (on average, for the worst loss $\mathcal L$ for Bob).
On the other hand, $\widetilde{UI}_X$ measures the minimum number of bits Bob needs to borrow from $M$ in order to emulate Alice perfectly.

\subsection{A novel generalization of the $\delta$- and $\sim$-PIDs}
\label{sec_generalization}

The connection identified above also allows us to generalize both definitions using a single Lagrangian form:
\begin{equation} \label{eq_lambda_pid_new}
	\begin{aligned}
		\delta^\lambda(M : X \sm Y) \coloneqq \inf_{P_{X'|MY}} &\mathbb E_M\left[D_{KL}(P_{X|M} \,\Vert\, P_{X'|M})\right] \\
															   &+ \lambda \; I(M ; X' \given Y).
	\end{aligned}
\end{equation}
As $\lambda \to \infty$ in the Equation~\eqref{eq_lambda_pid_new}, we get the $\delta$-PID, and as $\lambda \to 0$, we get the $\sim$-PID.
This new $\delta^\lambda$-PID has to be written in terms of a deficiency and then symmetrized as in Definition~\ref{def_delta_pid}, since its redundancy will not be symmetric in general.

\section{Capturing the Unique Information}
\label{sec_tmxy_pid}

In this section, we propose a new PID definition that is able to \emph{capture} the unique information in the form of a random variable.
The quantity of unique information also has a simple operational interpretation in terms of mutual information.
\begin{definition}[$I$-PID]
    Let the \emph{information deficiency} of $Y$ with respect to $X$ about $M$ be given by
    \begin{equation}
		\delta^I(M : X \sm Y) \coloneqq \quad \sup_{\mathclap{P_{T|M} \,\in\, \mathcal C(\mathsf{T}|\sampm)}} \quad I(T ; X) - I(T ; Y).
    \end{equation}
	Here, $T$ is a random variable produced through the stochastic transformation $P_{T|M}$, and satisfies the Markov chain $T$---$M$---$(X, Y)$.
    Then, the redundant information may be defined as
    \begin{equation}
		\begin{aligned}
			RI^I(M : X ; Y) = \min\bigl\{ &I(M ; X) - \delta^I(M : X \sm Y), \\
										  &I(M ; Y) - \delta^I(M : Y \sm X) \bigr\}.
		\end{aligned}
    \end{equation}
\end{definition}

This definition is appealing, since it captures the basic intuition that if $X$ has unique information about $M$ with respect to $Y$, that means that $X$ has information about some ``part'' of $M$ which $Y$ does not have access to.
In practice, this could mean either that $X$ is able to access entire ``dimensions'' of $M$ that $Y$ cannot, or it could mean that $X$ has access to some of the same dimensions of $M$ as $Y$, but with lower noise, or it could be a combination of these factors.
In this definition, the stochastic transformation $P_{T|M}$ plays the role of \emph{extracting} these ``parts'' of $M$, which $X$ has access to but $Y$ does not.
The random variable $T$ corresponding to the optimal $P_{T|M}$ tells us the ``parts'' (or subspaces) of $M$ in which $X$ has unique information w.r.t.\ $Y$.

The operational interpretation for this definition is simply this: the unique information that $X$ has about $M$ with respect to $Y$ is the maximum information about $M$ which you can extract from $X$, which you cannot simultaneously get from $Y$.
That is, for any (possibly stochastic) function $f$ that depends only on $M$, we will always have
\begin{equation}
	I\bigl(f(M); X\bigr) \leq I\bigl(f(M) ; Y\bigr) + UI^I(M : X \sm Y).
\end{equation}
However, due to the need for symmetrization, this definition does suffer from the cyan region problem described in Section~\ref{sec_ri_delta_symm}.
This is one area where we still need to work on understanding its interpretation.

In what follows, we prove some basic properties about the $I$-PID, and show that it is Blackwellian for Gaussian $P_{MXY}$.
\begin{theorem}[Non-negativity and bounds on the $I$-PID] \label{thm_i_pid_nonnegativity}
	The $I$-PID atoms can be shown to be non-negative:
	\begin{align*}
		UI^I(M : X \sm Y) &\geq 0 & RI^I(M : X ; Y) &\geq 0 \\
		UI^I(M : Y \sm X) &\geq 0 & SI^I(M : X ; Y) &\geq 0
	\end{align*}
	The $I$-PID also satisfies the natural bounds:
	\begin{align*}
		UI^I(M : X \sm Y), \; RI^I(M : X ; Y) &\leq I(M ; X), \\
		UI^I(M : X \sm Y), \; SI^I(M : X ; Y) &\leq I(M ; X \given Y).
	\end{align*}
\end{theorem}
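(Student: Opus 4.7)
The plan is to first establish the three key bounds on the raw information deficiency $\delta^I$---namely $\delta^I(M : X \sm Y) \geq 0$, $\delta^I(M : X \sm Y) \leq I(M;X)$, and $\delta^I(M : X \sm Y) \leq I(M;X \given Y)$---and then read off every claim in the theorem from the PID identities $UI^I_X + RI^I = I(M;X)$, $UI^I_Y + RI^I = I(M;Y)$, and $I(M; X, Y) = UI^I_X + UI^I_Y + RI^I + SI^I$.

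For the three deficiency bounds, I would proceed as follows. Non-negativity is immediate by taking $T$ independent of $M$, which yields $I(T;X) - I(T;Y) = 0$ as a feasible value of the supremum. The bound $\delta^I(M : X \sm Y) \leq I(M;X)$ is a one-line data-processing argument on the Markov chain $T$---$M$---$X$: every admissible $T$ satisfies $I(T;X) \leq I(M;X)$, and therefore $I(T;X) - I(T;Y) \leq I(M;X)$. For the bound $\delta^I(M : X \sm Y) \leq I(M;X\given Y)$, I would expand $I(T; X, Y)$ by the chain rule in both orders to get $I(T;X) - I(T;Y) = I(T;X\given Y) - I(T;Y\given X) \leq I(T;X\given Y)$, and then apply data processing along the chain $T$---$(M,Y)$---$X$ (which follows from $T$---$M$---$(X,Y)$) to conclude $I(T;X\given Y) \leq I(M;X\given Y)$.

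With these three bounds in hand, the rest is bookkeeping. Unrolling the definition of $RI^I$ yields $UI^I(M : X \sm Y) = \max\{\delta^I(M : X \sm Y),\, \delta^I(M : Y \sm X) + I(M;X) - I(M;Y)\}$, so non-negativity of $UI^I_X$ follows from non-negativity of $\delta^I(M : X \sm Y)$. The bound $UI^I_X \leq I(M;X\given Y)$ must be verified on both arms of the max: the first is immediate from the third deficiency bound, while the second uses the algebraic identity $I(M;Y\given X) + I(M;X) - I(M;Y) = I(M;X\given Y)$ together with the $X \leftrightarrow Y$ version of that same deficiency bound. Non-negativity of $RI^I$ drops out of $\delta^I(M : X \sm Y) \leq I(M;X)$ and its $Y$-counterpart, and then $UI^I_X \leq I(M;X)$ and $RI^I \leq I(M;X)$ both follow from $UI^I_X + RI^I = I(M;X)$ with one term non-negative. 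Finally, from $SI^I = I(M; X \given Y) - UI^I_X = I(M; Y \given X) - UI^I_Y$, I can read off $SI^I \geq 0$ (from the conditional-MI bound on $UI^I_Y$) and $SI^I \leq I(M; X \given Y)$ (from non-negativity of $UI^I_X$).

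I do not anticipate a significant technical obstacle; the only mildly delicate step is verifying $UI^I_X \leq I(M;X\given Y)$ on the second arm of the max introduced by redundancy symmetrization---precisely the \emph{cyan region} issue highlighted in Section~\ref{sec_ri_delta_symm}---but here it collapses to the clean identity above. The remainder is a careful sequence of applications of the chain rule and the data-processing inequality.
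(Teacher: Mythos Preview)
Your proposal is correct and follows essentially the same route as the paper's proof: establish the three bounds $0 \leq \delta^I(M:X\sm Y) \leq \min\{I(M;X),\, I(M;X\given Y)\}$ via a trivial $T$, ordinary data processing, and the chain-rule rewrite $I(T;X)-I(T;Y)=I(T;X\given Y)-I(T;Y\given X)$ followed by conditional data processing, and then read off all the PID bounds from the identities. If anything, you are slightly more careful than the paper, which only treats the case $UI^I_X=\delta^I(M:X\sm Y)$ explicitly and leaves the other branch of the $\min$ to the reader; your explicit verification of $UI^I_X\leq I(M;X\given Y)$ on the second arm via $I(M;Y\given X)+I(M;X)-I(M;Y)=I(M;X\given Y)$ is exactly what is needed there.
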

\begin{theorem}[The $I$-PID is Blackwellian for Gaussian $P_{MXY}$] \label{thm_i_pid_blackwellian}
	If $P_{MXY}$ is jointly Gaussian, then the $I$-PID unique information satisfies:
	\begin{equation}
		UI^I(M : X \sm Y) = 0 \quad \Leftrightarrow \quad Y \suff_M X.
	\end{equation}
\end{theorem}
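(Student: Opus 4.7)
The plan is to prove both directions of the equivalence, with only the forward implication requiring the Gaussian hypothesis. Rewriting the unique information via the PID equations gives
\[
UI^I(M:X\sm Y) \,=\, \max\bigl\{\delta^I(M:X\sm Y),\; I(M;X) - I(M;Y) + \delta^I(M:Y\sm X)\bigr\},
\]
and both quantities inside the max are non-negative---the first trivially (take $T$ constant), the second because taking $T = M$ already yields $\delta^I(M:Y\sm X) \geq I(M;Y) - I(M;X)$. Consequently $UI^I = 0$ is equivalent to the conjunction of $\delta^I(M:X\sm Y) = 0$ and $\delta^I(M:Y\sm X) = I(M;Y) - I(M;X)$.

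For the reverse direction ($Y \suff_M X \Rightarrow UI^I = 0$), which actually works for arbitrary $P_{MXY}$, I would unfold Blackwell sufficiency via a copy variable $X' = g(Y, Z)$ with $Z$ exogenous and $(M, X') \stackrel{d}{=} (M, X)$. For any $T$ with $T$---$M$---$(X, Y)$, the extended Markov chain $T$---$M$---$Y$---$X'$ combined with the copy property yields $I(T;X) = I(T;X') \leq I(T;Y)$ by the data processing inequality, proving $\delta^I(M:X\sm Y) = 0$. The second requirement follows from the identity $I(T;Y) - I(T;X) = (I(M;Y) - I(M;X)) + (H(M \given Y,T) - H(M \given X,T))$ (derived using $T \indept Y \given M$), together with the chain $H(M \given X,T) = H(M \given X',T) \geq H(M \given Y, X', T) = H(M \given Y, T)$, where the final equality uses that $X'$ is a function of $Y$ and exogenous $Z$.

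For the forward direction (Gaussian case), note that $UI^I = 0$ implies $\delta^I(M:X\sm Y) = 0$. I would then probe this condition with the Gaussian family $T_\epsilon = M + \epsilon^{-1}\Sigma^{1/2}W$, where $W \sim \mathcal{N}(0,I)$ is independent of $(M,X,Y)$ and $\Sigma \succ 0$ is arbitrary. A direct computation using the Gaussian mutual information formula, followed by a Taylor expansion in $\epsilon$, gives
\[
I(T_\epsilon;X) - I(T_\epsilon;Y) \,=\, \tfrac{\epsilon^2}{2}\,\mathrm{tr}\bigl(\Sigma^{-1}(\Sigma_{M|Y} - \Sigma_{M|X})\bigr) + O(\epsilon^4).
\]
Non-positivity of this for all small $\epsilon$ and all $\Sigma \succ 0$ forces $\Sigma_{M|X} \succeq \Sigma_{M|Y}$ in the Loewner order. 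Writing the Gaussian channels as $X = K_X M + N_X$ and $Y = K_Y M + N_Y$, the matrix inversion lemma converts this into $K_X^{\top}\Sigma_{X|M}^{-1}K_X \preceq K_Y^{\top}\Sigma_{Y|M}^{-1}K_Y$, and Douglas's lemma then produces an operator $B$ with $BK_Y = K_X$ and $B\Sigma_{Y|M}B^{\top} \preceq \Sigma_{X|M}$. Setting $X' = BY + W'$ with $W' \sim \mathcal{N}(0, \Sigma_{X|M} - B\Sigma_{Y|M}B^{\top})$ independent of $Y$ finally furnishes a Gaussian channel $P_{X'|Y}$ satisfying $P_{X'|Y}\circ P_{Y|M} = P_{X|M}$, i.e., $Y \suff_M X$.

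The main obstacle is the forward direction's two-step reduction: first, extracting a matrix Loewner inequality from the single scalar equation $\delta^I = 0$ via an $\epsilon$-expansion that truly uses Gaussianity of both $M$ and the channels; and second, upgrading Loewner dominance of posterior covariances to genuine Blackwell sufficiency via Douglas's lemma. This second step is where the Gaussian hypothesis is essential, since the corresponding Loewner inequality is strictly weaker than Blackwell sufficiency for non-Gaussian joints.
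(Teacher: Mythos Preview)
Your proof is correct. The overall structure matches the paper's---both directions are handled, and the forward direction passes through the Loewner inequality $\Lambda_X \preceq \Lambda_Y$ (equivalently $\Sigma_{M|X} \succeq \Sigma_{M|Y}$)---but the execution differs in each half. For $(\Leftarrow)$, the paper simply notes that the $I$-PID satisfies Assumption~$(*)$ of Bertschinger et al., so $UI^I \leq \widetilde{UI}$, and the Blackwellian property of the $\sim$-PID finishes the job; your copy-plus-DPI argument handling both terms of the $\max$ is more self-contained but longer. For $(\Rightarrow)$, the paper works by contrapositive and cites the known Gaussian characterization $Y \suff_M X \Leftrightarrow \Lambda_Y \succeq \Lambda_X$ from~\cite{venkatesh2022partial}: if this fails, pick a witness vector $c$ with $c^\T\Lambda_X c > c^\T\Lambda_Y c$ and take a \emph{one-dimensional} probe $T$ with $H_T = c$, which immediately gives $I(T;X) > I(T;Y)$. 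Your $\epsilon$-expansion with a full-dimensional $T_\epsilon$, followed by the matrix-inversion-lemma conversion and the Douglas-lemma channel construction, effectively re-proves that cited characterization from scratch. The paper's version is more economical because it outsources the Blackwell-sufficiency half to the literature and needs only a scalar $T$; yours is entirely self-contained and makes the role of Gaussianity (via the explicit channel $X' = BY + W'$) more transparent.
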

Proofs of these theorems are presented in Appendix~\ref{app_i_pid_proofs}.
In particular, Theorem~\ref{thm_i_pid_blackwellian} implies that prior results for Gaussians~\cite{venkatesh2022partial} are also applicable to the $I$-PID.
We conjecture that Theorem~\ref{thm_i_pid_blackwellian} can be generalized, i.e., the $I$-PID is Blackwellian in general, but leave an investigation of this to future work.

\bibliographystyle{IEEEtran}
\bibliography{references}

\appendices

\section{Proof of Theorem~\ref{thm_delta_pid_op_int}}
\label{app_delta_int_proof}

\begin{proof}
Consider the difference in average risks:
\begin{align}
	\MoveEqLeft[1] \bar {\mathcal R}(P_{X|M}, \hat M_X, \mathcal L) - \bar {\mathcal R}(P_{Y|M}, \hat M_Y, \mathcal L) \\
	&= \mathbb E_M \Bigl[ \mathbb E_{X|M} \bigl[ \mathcal L(\hat M_X(X), M) \bigr] - \mathbb E_{Y|M} \bigl[ \mathcal L(\hat M_Y(Y), M) \bigr] \Bigr] \notag \\
	&= \mathbb E_M \biggl[ \int P_{X|M} \cdot \mathcal L(\hat M_X(X), M) \, dx \notag \\
	&\qquad\qquad- \int P_{Y'|M} \circ P_{X|M} \cdot \mathcal L(\hat M_Y(Y), M) \, dy \notag \\
	&\qquad\qquad+ \int P_{Y'|M} \circ P_{X|M} \cdot \mathcal L(\hat M_Y(Y), M) \, dy \notag \\
	&\qquad\qquad- \int P_{Y|M} \cdot \mathcal L(\hat M_Y(Y), M) \, dy \biggr] \label{eq_risk_diff_expr}
\end{align}
Now, the last two terms of this expression can be bounded using the bound on $\mathcal L$ and the total variation distance:
\begin{align}
	\MoveEqLeft[1] \mathbb E_M \biggl[ \int P_{Y'|M} \circ P_{X|M} \cdot \mathcal L(\hat M_Y(Y), M) \, dy \notag \\
	&\qquad\qquad- \int P_{Y|M} \cdot \mathcal L(\hat M_Y(Y), M) \, dy \biggr] \\
	&= \mathbb E_M \int \bigl( P_{Y'|M} \circ P_{X|M} - P_{Y|M} \bigr) \cdot \mathcal L(\hat M_Y(Y), M) \, dy \notag \\
	&\overset{(a)}{\leq} \mathbb \lVert \mathcal L \rVert_\infty \cdot \mathbb E_M \, \bigl\lVert P_{Y'|M} \circ P_{X|M} - P_{Y|M} \bigr\rVert_{TV} \\
	&\overset{(b)}{\leq} \mathbb \lVert \mathcal L \rVert_\infty \cdot \frac{1}{\sqrt{2}} \, \mathbb E_M \, \sqrt{ D_{KL} \bigl( P_{Y|M} \big\Vert P_{Y'|M} \circ P_{X|M} \bigr) } \\
	&\overset{(c)}{\leq} \mathbb \lVert \mathcal L \rVert_\infty \cdot \sqrt{ \frac{1}{2} \, \mathbb E_M \, D_{KL} \bigl( P_{Y|M} \big\Vert P_{Y'|M} \circ P_{X|M} \bigr) } \\
	&\overset{(d)}{=} \mathbb \lVert \mathcal L \rVert_\infty \cdot g\bigl( \delta(M : Y \sm X) \bigr),
\end{align}
where in (a) we have used the bound on $\mathcal L$ and the definition of the total variation norm, in (b) we have used Pinsker's inequality~\cite[Lemma~2.5]{tsybakov2009introduction}, in (c) we have used Jensen's inequality~\cite[Thm.~2.6.2]{cover2012elements}, and in (d), we have set $g(z) \coloneqq \sqrt{z/2}$.

It only remains to be shown that the first two terms of the expression in Equation~\eqref{eq_risk_diff_expr} can be upper bounded by zero.
Examining the first two terms, for any $\hat M_Y(y)$, we can derive a stochastic action rule, $\hat M_X(x)$ that will attain the same risk: we can first draw $\tilde y \sim P_{Y'|X}$ and then select the action $\hat M_Y(\tilde y)$.
Thus,
\begin{align}
	\MoveEqLeft[1] \mathbb E_M \biggl[ \int P_{X|M} \cdot \mathcal L(\hat M_X(X), M) \, dx \\
	&\qquad\qquad- \int P_{Y'|M} \circ P_{X|M} \cdot \mathcal L(\hat M_Y(Y), M) \, dy \biggr] \leq 0, \notag
\end{align}
which completes the proof.
\end{proof}

\section{Proofs of Theorems~\ref{thm_i_pid_nonnegativity} and \ref{thm_i_pid_blackwellian}}
\label{app_i_pid_proofs}

\begin{proof}[Proof of Theorem~\ref{thm_i_pid_nonnegativity}]
	First, observe that
	\begin{align}
		\delta^I(M : X \sm Y) &\coloneqq \sup_T \; I(T ; X) - I(T ; Y) \\
							  &\geq I(0 ; X) - I(0 ; Y) = 0.
	\end{align}
	Furthermore,
	\begin{equation}
		I(T ; X) - I(T ; Y) \leq I(T ; X) \leq I(M ; X),
	\end{equation}
	where the last inequality follows by the data processing inequality and the Markov chain $T$---$M$---$(X, Y)$.
	Thus,
	\begin{align}
		\delta^I(M : X \sm Y) &\leq I(M ; X) \\
		0 \leq I(M ; X) - \delta^I(M : X \sm Y) &\leq I(M ; X) \\
		0 \leq I(M ; Y) - \delta^I(M : Y \sm X) &\leq I(M ; Y)
	\end{align}
	This implies
	\begin{gather}
		0 \leq RI^I(M : X ; Y) \leq \min \bigl\{ I(M ; X), I(M ; Y) \bigr\} \\
		0 \leq UI^I(M : X \sm Y) \leq I(M ; X) \\
		0 \leq UI^I(M : Y \sm X) \leq I(M ; Y)
	\end{gather}
	Furthermore,
	\begin{align}
		I(T ; X) - I(T ; Y) &= I(T ; (X, Y)) - I(T ; Y \given X) \\
							&\hphantom{= \vphantom{I}} - I(T ; (X, Y)) + I(T ; X \given Y) \notag \\
							&= I(T ; X \given Y) - I(T ; Y \given X) \\
							&\leq I(T ; X \given Y) \leq I(M ; X \given Y), \label{eq_cond_mi_bound}
	\end{align}
	where in the very last inequality follows from the fact that $T \indept (X, Y) \given M$ and the data processing inequality~\cite[Ch.~2]{cover2012elements}. This may not be obvious, but it follows the same proof as the data processing inequality:
	\begin{align}
		I\bigl(T, M ; X \given Y\bigr) &= I\bigl(T ; X \given Y\bigr) + I\bigl(M ; X \given Y, T\bigr) \\
		&= I\bigl(M ; X \given Y\bigr) + I\bigl(T ; X \given Y, M\bigr)
	\end{align}
	From this it follows that
	\begin{align}
		I\bigl(T ; X \given Y\bigr) + I\bigl(M ; X \given Y, T\bigr) &\overset{(a)}{=} I\bigl(M ; X \given Y\bigr) \\
		I\bigl(T ; X \given Y\bigr) &\overset{(b)}{\leq} I\bigl(M ; X \given Y\bigr),
	\end{align}
	where (a) follows from the fact that $I\bigl(T ; X \given Y, M\bigr) = 0$ since $T \indept (X, Y) \given M$, while (b) uses $I\bigl(M ; X \given Y, T\bigr) \geq 0$.
	This justifies Equation~\eqref{eq_cond_mi_bound}, which implies
	\begin{align}
		\delta^I(M : X \sm Y) &\leq I(M ; X \given Y) \\
		\delta^I(M : Y \sm X) &\leq I(M ; Y \given X)
	\end{align}
	If $UI^I(M : X \sm Y) = \delta^I(M : X \sm Y)$, then $SI^I(M : X ; Y) = I(M ; X \given Y) - \delta^I(M : X \sm Y) \geq 0$, and $SI \leq I(M ; X \given Y)$.
	This shows that all terms in the $I$-PID are non-negative and bounded.
\end{proof}

\begin{proof}[Proof of Theorem~\ref{thm_i_pid_blackwellian}]
	We need to show that when $P_{MXY}$ is jointly Gaussian,
	\begin{equation}
		UI^I_X = 0 \quad\Leftrightarrow\quad Y \suff_M X.
	\end{equation}

	$(\Leftarrow)$ Observe that the $I$-PID satisfies Assumption~$(*)$ from Bertschinger et al.~\cite{bertschinger2014quantifying}, i.e., $UI_X$ is a function only of $P_M$, $P_{X|M}$ and $P_{Y|M}$.
	Thus, by \cite[Lemma~3]{bertschinger2014quantifying}, $UI^I_X \leq \widetilde{UI}_X$.
	Since the $\sim$-PID is Blackwellian, $Y \suff_M X \;\Leftrightarrow\; \widetilde{UI}_X = 0 \Rightarrow UI^I_X = 0$.
	
	This part of the proof holds irrespective of the distribution of $P_{MXY}$.

	$(\Rightarrow)$ Now, suppose $P_{MXY}$ is Gaussian.
	Then it suffices to show that whenever ${Y \,\;/\,\mathclap{\suff_M}\;\;\;\, X}$,  $\exists\; P_{T|M}$ such that $I(T ; X) - I(T ; Y) > 0$, to ensure that $UI^I_X > 0$.

	Following the notation of \cite{venkatesh2022partial}, let $\Sigma_{MXY}$ be represent the joint covariance matrix (which fully specifies information measures on the joint distribution), let $\Sigma_{X|M}$ represent the conditional covariance matrix of $X$ given $M$ and let $\Sigma_{X,Y}$ represent the cross-covariance of $X$ and $Y$.
	Let $\Lambda_X \coloneqq \Sigma_{X,M}^\T \Sigma_{X|M}^{-1} \Sigma_{X, M}$ and $\Lambda_Y \coloneqq \Sigma_{Y,M}^\T \Sigma_{Y|M}^{-1} \Sigma_{Y, M}$.
	Then, \cite[Theorem~2]{venkatesh2022partial}, states
	\begin{equation}
		Y \suff_M X \quad\Leftrightarrow\quad \Lambda_Y \suff \Lambda_X,
	\end{equation}
	where for positive semidefinite matrices $A$ and $B$, $A \suff B$ denotes that $A - B$ is positive semidefinite.

	Consider $P_{T|M}$ to be a normal distribution, given by $\mathcal N(H_T M, \Sigma_{T|M})$.
	Further, we can assume without loss of generality that $\Sigma_M = I$.
	Then, $\Sigma_{T,X} = H_T \Sigma_M \Sigma_{M,X} = H_T \Sigma_{X,M}^\T$.
	The mutual information between $T$ and $X$ is given by:
	\begin{align}
		\MoveEqLeft[1] I(T ; X) \notag \\
		&= \frac{1}{2} \log\det(I + \Sigma_T^{-1} H^{\vphantom{\T}}_T \Sigma_{X,M}^\T \Sigma_{X|M}^{-1} \Sigma^{\vphantom{\T}}_{X,M} H_T^\T) \\
		&= \frac{1}{2} \log\det(I + \Sigma_T^{-1/2} H^{\vphantom{\T}}_T \Sigma_{X,M}^\T \Sigma_{X|M}^{-1} \Sigma^{\vphantom{\T}}_{X,M} H_T^\T \Sigma_T^{-1/2}) \\
		&= \frac{1}{2} \log\det(I + \Sigma_T^{-1/2} H^{\vphantom{\T}}_T \Lambda_X H_T^\T \Sigma_T^{-1/2})
	\end{align}
	Then,
	\begin{align}
		\df{X}{Y} &= \frac{1}{2} \log\det(I + \Sigma_T^{-1/2} H^{\vphantom{\T}}_T \Lambda_X H_T^\T \Sigma_T^{-1/2}) \\
				  &\hphantom{= \vphantom{I}}- \frac{1}{2} \log\det(I + \Sigma_T^{-1/2} H^{\vphantom{\T}}_T \Lambda_Y H_T^\T \Sigma_T^{-1/2})
	\end{align}
	If ${Y \,\;/\,\mathclap{\suff_M}\;\;\;\, X}$, then $\Lambda_Y \;\;\mathclap{\suff}\mathclap{/}\;\;\, \Lambda_X$, i.e., $\exists\; c \in \mathbb R$ s.t.
	\begin{equation}
		c^\T \Lambda_X c > c^\T \Lambda_Y c.
	\end{equation}
	Letting $\Sigma_T = I$ and $H_T = c$, we have that
	\begin{align}
		1 + c^\T \Lambda_X c &> 1 + c^\T \Lambda_Y c \\
		\det(1 + c^\T \Lambda_X c) &> \det(1 + c^\T \Lambda_Y c) \\
		\frac{1}{2} \log\det(1 + c^\T \Lambda_X c) &> \frac{1}{2} \log\det(1 + c^\T \Lambda_Y c)
	\end{align}
	This implies
	\begin{align}
		\frac{1}{2} \log\det(1 + c^\T \Lambda_X c) - \frac{1}{2} \log\det(1 + c^\T \Lambda_Y c) &> 0 \\
		\Rightarrow \qquad \df{X}{Y} &> 0
	\end{align}
	Recognizing that $UI^\delta(M : X \sm Y) \geq \delta(M : X \sm Y)$ (see Equation~\eqref{eq_ui_gt_delta}), this completes the proof.
\end{proof}

\end{document}